\def\BibTeX{{\rm B\kern-.05em{\sc i\kern-.025em b}\kern-.08em
    T\kern-.1667em\lower.7ex\hbox{E}\kern-.125emX}}
\newtheorem{defn}{Definition}
\newtheorem{lemma}{Lemma}
\newtheorem{theorem}{Theorem}
\newtheorem{corollary}{Corollary}
\newtheorem{remark}{Remark}
\newcommand{\eg}{\emph{e.g.}}
\newcommand{\ie}{\emph{i.e.}}
\newcommand{\cf}{\emph{cf.}}
\newcommand{\concat}[3]{#1 +\hspace{-3pt}^{#3} #2}
\newcommand{\nerode}[1]{\stackrel{#1}{\sim}}
\newcommand{\newstuff}[1]{{#1}}
\begin{document}
\title{On state-space representations of general discrete-time dynamical systems}
\author{Cristian R. Rojas, \IEEEmembership{Member, IEEE}, and Pawe\l{} Wachel, \IEEEmembership{Member, IEEE}
\thanks{This work was partially supported by the Swedish Research Council under contract number 2016-06079 (NewLEADS) and by the Digital Futures project EXTREMUM.}
\thanks{C. R. Rojas is with the School of Electrical Engineering and Computer Science, KTH Royal Institute of Technology, 100 44 Stockholm, Sweden (e-mail: cristian.rojas@ee.kth.se). }
\thanks{P. Wachel is with the Department of Control Systems and Mechatronics, Wroc\l{}aw University of Science and Technology, Wroc\l{}aw, Poland (e-mail: pawel.wachel@pwr.edu.pl).}}

\maketitle

\begin{abstract}
In this paper we establish that every (deterministic) non-autonomous, discrete-time, causal, time invariant system has a state-space representation\newstuff{, and discuss its minimality}.
\end{abstract}

\begin{IEEEkeywords}
System realization, Nonlinear dynamical systems, State-space representation
\end{IEEEkeywords}

\section{Introduction} \label{sec:introduction}
The concept of dynamical systems, or systems which can evolve over time, is crucial in many fields of science and engineering, including statistics, economics, control theory and computer science (under the name of \emph{automata}~\cite{Hopcroft-Motwani-Ullman-06}). Many formulations of a dynamical system include the notion of a \emph{state}, as a vector quantity whose values capture the whole ``history'' of the system up to a given time instant, \eg, \cite{Arnold-92,Bellman-57,Bellman-61,Fuller-60,Katok-Hasselblatt-95,LaSalle-76,Luenberger-79,Sontag-98,Mesarovic-Takahara-75}. A formulation of a system that describes how its state evolves over time is informally called a \emph{state-space representation} of the system.

Within the field of control theory, state-space descriptions for non-autonomous systems (\ie, with an external input signal) became popular thanks to the work of R. E. Kalman~\cite{Kalman:60a}, who developed a general theory of analysis and control of linear systems in state-space form. According to Kalman, the state vector of a general (non-autonomous) dynamical system corresponds to the smallest set of values, specified at a given time $t=t_0$, which, together with the value of the inputs for every time $t \geq t_0$, are enough to predict the behaviour of the system for any time $t \geq t_0$; see also \cite{Zadeh-Desoer-63}. This notion plays a fundamental role in control, as it allows, \eg, to decompose the design of a controller for a system in terms of a state estimator and a (static) state feedback block.

A general non-autonomous dynamical system can be defined in a more general way as an \emph{arbitrary mapping} between input and output signals~\cite{Oppenheim-Schafer-10}. A natural question that arises is whether every such system can be described in state-space form, and if so, how such a representation can be found; this is called the \emph{state-space realization problem}. Within the class of linear systems, this problem was first solved by Ho and Kalman in \cite{Ho-Kalman-66}, who provided a constructive algorithm to obtain such a representation. Ho-Kalman's algorithm provided the foundation for the area of subspace identification algorithms~\cite{Kung-78, Katayama-05}, which allow one to estimate models of linear systems directly in state-space form.

Further investigations on the state-space realization problem for linear systems led to an abstract input-output characterization of the notion of state, based on the concept of \emph{Nerode equivalence}~\cite{Nerode-58}, according to which the state is related to the set of input sequences which, being equal for all times $t \geq t_0$, yield the same output sequence for all $t \geq t_0$~\cite{Kailath-80}. See \cite{Kalman-Falb-Arbib-69} for a presentation of this concept at the intersection of automata theory and mathematical system theory.

While the realization problem can be considered to be solved for linear systems, to the best of our knowledge it has not been addressed for general (possibly nonlinear) non-autonomous dynamical systems. The goal of this note is to fill this gap, by studying the general discrete-time state-space realization problem. Specifically, our contribution is to show that every non-autonomous discrete-time system which is time invariant and causal can be described in state-space form\newstuff{, by using the notion of Nerode equivalence, and then to discuss the minimality of such Nerode representation}.

\newstuff{It is worth mentioning that the realization problem has also been intensely studied within \emph{behavioural system theory}~\cite{Polderman-Willems-98}, in particular for linear systems~\cite{Verriest-20}, where a system is defined as a set of \emph{behaviours}, \ie, a subset of the class of all trajectories that its external signals can take\footnote{Behavioural system theory does not make a distinction between input and output signals.}. Since the definitions in this field differ from those used in input-output system theory, we will not pursue this line of research in this paper.}

The paper is structured as follows. Section~\ref{sec:notation} introduces the main notation and definitions used in this note. The concept of state, based on an extension of Nerode equivalence to general dynamical systems, is given in Section~\ref{sec:state}. The main result of this note\newstuff{, namely, the existence of a state-space representation for causal, time invariant systems,} is provided in Section~\ref{sec:state-space-representation}\newstuff{, the minimality of the obtained representation is treated in Section~\ref{sec:minimality}}, and Section~\ref{sec:conclusions} concludes the paper.
  
%
%

\section{Notation and definitions} \label{sec:notation}

In the sequel, $\mathbb{Z}$ is the set of integers $\{\dots, -2, -1, 0, 1, 2, \dots \}$, $\mathbb{N}$ is the set of natural numbers $\{1, 2, \dots \}$, $\mathbb{N}_0 := \mathbb{N} \cup \{ 0 \} = \{ 0, 1, 2, \dots \}$, and $\mathbb{Z}_- := \{ 0, -1, -2, \dots \}$.
If $A, B$ are sets, $B^A$ denotes the set of all functions from $A$ to $B$; if $A$ is a subset of $\mathbb{Z}$, $B^A$ can be seen as a set of vectors or ordered tuples (if $A$ is finite) or sequences (\eg, $B^{\mathbb{N}}$ is the set of sequences $b = (b_1, b_2, b_3, \dots)$, with $b_k \in B$ for all $k \in \mathbb{N}$).
\newstuff{For a set $A \subseteq \mathbb{Z}$, and $n \in \mathbb{Z}$, let $A + n := \{ k + n\colon k \in A \}$.}
Given two mappings $f\colon X \to Y$, $g\colon Y \to Z$, their composition is denoted as $g \circ f\colon X \to Z$, or simply as $g f$, when it is clear from the context.

\begin{defn}
Given sets $U$, $Y$, a \emph{(discrete-time, non-autonomous) dynamical system from $U$ to $Y$} is a mapping $T\colon U^\mathbb{Z} \to Y^\mathbb{Z}$.
\end{defn}

In words, a dynamical system is simply a function that maps sequences in $U$ into sequences in $Y$.

\smallskip
Given a set $X$, let $q_X\colon X^\mathbb{Z} \to X^\mathbb{Z}$ be the mapping that satisfies $(q_X x)(n) = x(n+1)$ for every $x \in X^\mathbb{Z}$, $n \in \mathbb{N}$; $q_X$ is called the \emph{shift operator in $X^\mathbb{Z}$}.

Note that $q_X$ is bijective, with inverse $q_X^{-1}$. By convention, we denote the $n$-fold composition $q_X \circ \cdots \circ q_X$ as $q_X^n$ ($n \in \mathbb{N}$), and similarly $q_X^{-n}$ is the $n$-fold composition $q_X^{-1} \circ \cdots \circ q_X^{-1}$.

\begin{defn}
Given sets $U,Y$, a dynamical system $T$ from $U$ to $Y$ is \emph{time invariant} if $q_Y^n \circ T = T \circ q_U^n$ for all $n \in \mathbb{Z}$.
\end{defn}

Thus, for a time invariant system, if the input sequence $u\in U^\mathbb{Z}$ is shifted by $n$ time units, the output $T u$ is also shifted by $n$ time units.

\newstuff{
\begin{remark}
It is possible to define a time invariant system as mapping sequences in $\mathbb{N}_0$ rather than in $\mathbb{Z}$, by including ``initial conditions'' as part of its definition. However, doing so implies somehow already having a state-space realization of the system (whose existence we aim to establish in this paper), since those initial conditions provide the state of the system at time 0 (albeit not in a minimal realization) according to the definition of state to be given in the next section.
\end{remark}}

\smallskip
Consider a non-empty set $U$, and $A \subseteq \mathbb{Z}$. \newstuff{Given a fixed element $o \in U$,} let us define the \emph{projection operator onto $U^A$}, $\pi_A\colon U^{\mathbb{Z}} \to U^{\mathbb{Z}}$, where\footnote{The projection operator $\pi_A$ depends of course on $U$ and $o \in U$, but these dependencies will be omitted to keep the notation simple.} (for every $x \in U^{\mathbb{Z}}$, $n \in \mathbb{Z}$)
\begin{align*}
(\pi_A x)(n) := \begin{cases}
x(n), & \text{ if } n \in A \\
o,    & \text{ otherwise}.
\end{cases}
\end{align*}

Note that, for every $k,n \in \mathbb{Z}$,
\begin{align*}
(\pi_A q_U^n u) (k)
&= \begin{cases}
(q_U^n u)(k), & \text{ if } k \in A \\
o, & \text{otherwise}
\end{cases} \\
&= \begin{cases}
u(k+n), & \text{ if } k \in A \\
o, & \text{otherwise}
\end{cases} \\
&= q_U^n \begin{cases}
u(k), & \text{ if } k - n \in A \\
o, & \text{otherwise}
\end{cases} \\
&= (q_U^n \pi_{A + n} u) (k),
\end{align*}
thus $\pi_A q_U^n = q_U^n \pi_{A + n}$. In particular,
\begin{align} \label{eq:q_pi}
\begin{aligned}
\pi_{\mathbb{Z}_-} q_U^n &= q_U^n \pi_{\{\dots, n-1, n \}} \\
\pi_{\mathbb{N}_0} q_U^n &= q_U^n \pi_{\{n, n+1, \dots\}}, \qquad \text{for all } n \in\mathbb{Z}.
\end{aligned}
\end{align}

\begin{defn}
A time invariant dynamical system $T$ on a set $U$ is \emph{causal} if $(T \pi_{\mathbb{Z}_-} u)(0) = (T u)(0)$ for all $u \in U^{\mathbb{Z}}$.
\end{defn}

In words, the output of a causal system at time $0$ only depends on the values of its input up to time $0$.

\smallskip
Notice that, due to the time invariance of $T$ and \eqref{eq:q_pi}, its causality implies that, for all $u \in U^{\mathbb{Z}}$ and $n \in \mathbb{Z}$,
\begin{align*}
(T \pi_{\{\dots, n-2,n-1,n\}} u)(n)
&= (q_Y^n T \pi_{\{\dots, n-2,n-1,n\}} u)(0) \\
&= (T q_U^n \pi_{\{\dots, n-2,n-1,n\}} u)(0) \\
&= (T \pi_{\mathbb{Z}_-} q_U^n u)(0) \\
&= (T q_U^n u)(0) \\
&= (q_Y^n T u)(0) \\
&= (T u)(n).
\end{align*}
If $u_1,u_2 \in U^{\mathbb{Z}}$, let us define their \emph{concatenation at time $n$}, ${\concat{u_1}{u_2}{n}}$, as an element of $U^{\mathbb{Z}}$ such that, for every $k \in \mathbb{Z}$,
\begin{align*}
(\concat{u_1}{u_2}{n})(k) := \begin{cases}
u_1(k), & \text{ if } k < n \\
u_2(k), & \text{ if } k \geq n.
\end{cases}
\end{align*}

Note that, for every $i,k,n \in \mathbb{Z}$ and $u_1,u_2 \in U^{\mathbb{Z}}$,
\begin{align*}
[q_U^n(\concat{u_1}{u_2}{k})](i)
&= (\concat{u_1}{u_2}{k})(i+n) \\
&= \begin{cases}
u_1(i+n), & \text{ if } i+n < k \\
u_2(i+n), & \text{ if } i+n \geq k
\end{cases} \\
&= \begin{cases}
(q_U^n u_1)(i), & \text{ if } i < k - n \\
(q_U^n u_2)(i), & \text{ if } i \geq k - n
\end{cases} \\
&= (\concat{q_U^n u_1}{q_U^n u_2}{k-n})(i),
\end{align*}
hence
\begin{align} \label{eq:q_plus}
q_U^n(\concat{u_1}{u_2}{k}) = \concat{q_U^n u_1}{q_U^n u_2}{k-n}.
\end{align}

Finally, we need to define the \emph{insertion of a value $a \in U$ in a sequence $u \in U^{\mathbb{Z}}$ at time $n \in \mathbb{Z}$}, $(u, a)_n$, as an element in $U^{\mathbb{Z}}$ such that
\begin{align*}
(u,a)_n(k) := \begin{cases}
u(k), & \text{ if } k \neq n \\
a,    & \text{ if } k = n.
\end{cases}
\end{align*}
For all $u_1,u_2 \in U^{\mathbb{Z}}$, $a \in U$ and $i \in \mathbb{Z}$, it holds that
\begin{align*}
[\concat{u_1}{(u_2,a)_0}{0}](i)
&= \begin{cases}
u_1(i), & i < 0 \\
(u_2,a)_0(i) & i \geq 0
\end{cases} \\
&= \begin{cases}
u_1(i), & i < 0 \\
a, & i = 0 \\
u_2(i) & i > 0
\end{cases} \\
&= \begin{cases}
(u_1, a)_0(i), & i < 1 \\
u_2(i) & i \geq 1
\end{cases} \\
&= [\concat{(u_1, a)_0}{u_2}{1}](i),
\end{align*}
thus
\begin{align} \label{eq:concat_insert}
\concat{u_1}{(u_2,a)_0}{0} = \concat{(u_1, a)_0}{u_2}{1}.
\end{align}

\section{Concept of state} \label{sec:state}

Consider a causal, time invariant dynamical system $T$ from $U$ to $Y$.
Our goal is to define the notion of the ``state'' of $T$ at a given time $n$.
This can be achieved by extending the concept of Nerode equivalence, developed in \cite{Nerode-58} for linear systems, as done in the following definition.

\begin{defn}
Let $u_1,u_2 \in U^{\mathbb{Z}}$. The sequences $u_1$ and $u_2$ are \emph{Nerode equivalent (under $T$) at time $0$} if, for every $z \in U^{\mathbb{Z}}$, it holds that $\pi_{\mathbb{N}_0} T ({\concat{u_1}{z}{0}}) = \pi_{\mathbb{N}_0} T (\concat{u_2}{z}{0})$; if this condition holds, we use the notation $u_1 \nerode{0} u_2$.
\end{defn}

It can be seen that Nerode equivalence at time $0$ is an equivalence relation, \ie, it is \emph{reflexive} ($u \nerode{0} u$ for all $u \in U^{\mathbb{Z}}$), \emph{symmetric} ($u_1 \nerode{0} u_2$ implies $u_2 \nerode{0} u_1$ for all $u_1,u_2 \in U^{\mathbb{Z}}$) and \emph{transitive} ($u_1 \nerode{0} u_2$ and $u_2 \nerode{0} u_3$ imply $u_1 \nerode{0} u_3$ for all $u_1,u_2,u_3 \in U^{\mathbb{Z}}$).
The equivalence class of $u \in U^{\mathbb{Z}}$ is denoted $[u]_0 := \{ \tilde{u} \in U^{\mathbb{Z}}\colon u \nerode{0} \tilde{u} \}$. This definition can be extended to an arbitrary time $n$:

\begin{defn}
$u_1,u_2 \in U^{\mathbb{Z}}$ are said to be \emph{Nerode equivalent (under $T$) at time $n$} ($n \in \mathbb{Z}$), denoted $u_1 \nerode{n} u_2$, if $q_U^{-n} u_1 \nerode{0} q_U^{-n} u_2$.
The corresponding equivalence class of $u$ is denoted $[u]_n$, and it satisfies $[u]_n = q_U^n [q_U^{-n} u]_0$.
\end{defn}

The proof of the last statement is the following: $[u]_n = \{ \tilde{u} \in U^{\mathbb{Z}}\colon q_U^{-n} u \nerode{0} q_U^{-n} \tilde{u} \} = q_U^n \{ q_U^{-n} \tilde{u} \in U^{\mathbb{Z}}\colon q_U^{-n} u \nerode{0} q_U^{-n} \tilde{u} \} = q_U^n \{ u' \in U^{\mathbb{Z}}\colon q_U^{-n} u \nerode{0} u' \} = q_U^n [q_U^{-n} u]_0$.

\medskip
Roughly speaking, if $u_1,u_2$ are two Nerode equivalent inputs at time $0$,  the output of $T$ for non-negative times will be the same for $u_1$ and $u_2$ as long as $u_1(n) = u_2(n)$ for all $n \geq 0$. In this sense, we say that for both $u_1$ and $u_2$ the system $T$ is ``in the same state'' at time $0$.

\section{Existence of a state-space representation} \label{sec:state-space-representation}

In this section we will establish that every causal, time-invariant system has a state-space realization. Let us first define the notion of a general state-space realization.

\begin{defn}
Given sets $U, Y$, and a causal, time invariant system $T\colon U^{\mathbb{Z}} \to Y^{\mathbb{Z}}$, a \emph{state-space realization of $T$, with state space $X$}, is a pair $(f, g)$, with $f\colon U \times X \to X$ and $g\colon U \times X \to Y$, such that for every $u \in U^{\mathbb{Z}}$, there is an $x \in X^{\mathbb{Z}}$ for which
\begin{align} \label{eq:state_eqn}
\begin{aligned}
x(n+1) &= f(u(n), x(n)) \\
y(n) &= g(u(n), x(n)), \quad n \in \mathbb{Z},
\end{aligned}
\end{align}
where $y = T u \in Y^{\mathbb{Z}}$.
\end{defn}

Throughout this section, we will consider a fixed causal, time invariant dynamical system $T$. In Section~\ref{sec:state} we have seen that the state of $T$ at time $0$ can be represented by an equivalence class $[u]_0$. This means that the ``state space'' of $T$ can be represented by the quotient set
\begin{align*}
U^{\mathbb{Z}} / \nerode{0}\; := \{ [u]_0\colon u \in U^{\mathbb{Z}} \}.
\end{align*}
Notice that since $U$ and $Y$ are arbitrary sets, without, \eg, a pre-specified vector space or manifold structure, the state space $U^{\mathbb{Z}} / \nerode{0}$ is at this point merely an abstract set.

To arrive at a state-space representation of $T$, let us define the \emph{canonical projection of $U^{\mathbb{Z}}$ onto $U^{\mathbb{Z}} / \nerode{0}$}, $P\colon U^{\mathbb{Z}} \to U^{\mathbb{Z}} / \nerode{0}$, as $P u := [u]_0$ for all $u \in U^{\mathbb{Z}}$.

Next, we provide two lemmas stating the existence of functions which will play the role of a one-step state transition function and of an output mapping, respectively:

\begin{lemma} \label{lem:existence_f}
There exists a function $\bar{f}\colon U \times U^{\mathbb{Z}} / \nerode{0}\; \to U^{\mathbb{Z}} / \nerode{0}$ such that $\bar{f}(a, [u]_0) := [q_U (u,a)_0]_0$ for all $a \in U$, $u \in U^{\mathbb{Z}}$.
\end{lemma}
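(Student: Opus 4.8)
The statement claims $\bar f$ is well-defined by the formula $\bar f(a,[u]_0):=[q_U(u,a)_0]_0$. Since the right-hand side is written in terms of a representative $u$ of the equivalence class $[u]_0$, the only thing to prove is that it does not depend on the choice of representative: if $u_1\nerode{0}u_2$, then $q_U(u_1,a)_0\nerode{0}q_U(u_2,a)_0$ for every $a\in U$. (The map is then automatically total, since every class has a representative.) So the plan is: fix $a\in U$ and $u_1,u_2\in U^{\mathbb Z}$ with $u_1\nerode{0}u_2$, fix an arbitrary test sequence $z\in U^{\mathbb Z}$, and show $\pi_{\mathbb N_0}T(\concat{q_U(u_1,a)_0}{z}{0})=\pi_{\mathbb N_0}T(\concat{q_U(u_2,a)_0}{z}{0})$.

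The key is to rewrite the concatenation $\concat{q_U(u_i,a)_0}{z}{0}$ so that the Nerode hypothesis on $u_1,u_2$ at time $0$ can be applied, and this is where time invariance and the shift identities come in. Using \eqref{eq:q_plus}, we have $\concat{q_U(u_i,a)_0}{z}{0}=q_U\bigl(\concat{(u_i,a)_0}{q_U^{-1}z}{1}\bigr)$, and by \eqref{eq:concat_insert}, $\concat{(u_i,a)_0}{q_U^{-1}z}{1}=\concat{u_i}{((q_U^{-1}z),a)_0}{0}$. Hence, writing $z':=((q_U^{-1}z),a)_0$, we get $\concat{q_U(u_i,a)_0}{z}{0}=q_U(\concat{u_i}{z'}{0})$. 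Now by time invariance, $T(q_U(\concat{u_i}{z'}{0}))=q_Y T(\concat{u_i}{z'}{0})$, so applying $\pi_{\mathbb N_0}$ and using the first identity in \eqref{eq:q_pi} (i.e.\ $\pi_{\mathbb N_0}q_Y=q_Y\pi_{\{1,2,\dots\}}$, which follows from $\pi_{\mathbb N_0}q_Y^n=q_Y^n\pi_{\{n,n+1,\dots\}}$ with $n=1$), we obtain
\begin{align*}
\pi_{\mathbb N_0}T\bigl(\concat{q_U(u_i,a)_0}{z}{0}\bigr)=q_Y\,\pi_{\{1,2,\dots\}}T\bigl(\concat{u_i}{z'}{0}\bigr).
\end{align*}
From $u_1\nerode{0}u_2$ we know $\pi_{\mathbb N_0}T(\concat{u_1}{z'}{0})=\pi_{\mathbb N_0}T(\concat{u_2}{z'}{0})$, and since $\pi_{\{1,2,\dots\}}$ is just $\pi_{\mathbb N_0}$ followed by zeroing out coordinate $0$, these two are equal after applying $\pi_{\{1,2,\dots\}}$ as well; applying $q_Y$ then gives equality of the two left-hand sides, which is exactly $q_U(u_1,a)_0\nerode{0}q_U(u_2,a)_0$.

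The main obstacle is purely bookkeeping: chaining the three rewriting identities \eqref{eq:q_plus}, \eqref{eq:concat_insert} and \eqref{eq:q_pi} in the correct order and keeping track of which shift ($q_U$ vs.\ $q_Y$) acts where. There is no conceptual difficulty and no need for any choice principle, since we are only checking that a formula respects an equivalence relation rather than constructing a section of the quotient map. One should also note the harmless detail that $\pi_{\{1,2,\dots\}}\pi_{\mathbb N_0}=\pi_{\{1,2,\dots\}}$, so equality of the $\pi_{\mathbb N_0}$-projections indeed forces equality of the $\pi_{\{1,2,\dots\}}$-projections.
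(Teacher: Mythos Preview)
Your proposal is correct and follows essentially the same route as the paper's own proof: both reduce well-definedness to showing $q_U(u_1,a)_0 \nerode{0} q_U(u_2,a)_0$ whenever $u_1\nerode{0}u_2$, and both do so by combining the identities \eqref{eq:q_pi}, \eqref{eq:q_plus}, \eqref{eq:concat_insert} with time invariance, via the same key substitution $z'=(q_U^{-1}z,a)_0$. The only cosmetic difference is the direction of the chain (you rewrite the goal until the hypothesis applies, whereas the paper starts from the hypothesis and rewrites toward the goal), and you are a bit more careful to write $q_Y$ rather than $q_U$ when acting on output sequences.
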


\begin{proof}
We need to show that a function $\bar{f}$ satisfying $\bar{f}(a, [u]_0) = [q_U (u,a)_0]_0$ is well-defined. To this end, let $u_1, u_2 \in U^{\mathbb{Z}}$ such that $u_1 \nerode{0} u_2$. Then, taking some $a \in U$ and $z \in U^{\mathbb{Z}}$,
\begin{align*}
&u_1 \nerode{0} u_2 \\
&\Rightarrow \quad
\pi_{\mathbb{N}_0} T[\concat{u_1}{(q_U^{-1} z, a)_0}{0}] = \pi_{\mathbb{N}_0} T[\concat{u_2}{(q_U^{-1} z, a)_0}{0}] \\
&\Rightarrow \quad
\pi_{\mathbb{N}} T[\concat{u_1}{(q_U^{-1} z, a)_0}{0}] = \pi_{\mathbb{N}} T[\concat{u_2}{(q_U^{-1} z, a)_0}{0}] \\
&\Rightarrow \quad
\pi_{\mathbb{N}} T[\concat{(u_1, a)_0}{q_U^{-1} z}{1}] = \pi_{\mathbb{N}} T[\concat{(u_2, a)_0}{q_U^{-1} z}{1}] \\
&\Rightarrow \quad
q_U \pi_{\mathbb{N}} T[\concat{(u_1, a)_0}{q_U^{-1} z}{1}] = q_U \pi_{\mathbb{N}} T[\concat{(u_2, a)_0}{q_U^{-1} z}{1}] \\
&\Rightarrow \quad
\pi_{\mathbb{N}_0} q_U T[\concat{(u_1, a)_0}{q_U^{-1} z}{1}] = \pi_{\mathbb{N}_0} q_U T[\concat{(u_2, a)_0}{q_U^{-1} z}{1}] \\
&\Rightarrow \quad
\pi_{\mathbb{N}_0} T[\concat{q_U (u_1, a)_0}{q_U q_U^{-1} z}{0}] = \pi_{\mathbb{N}_0} T[\concat{q_U (u_2, a)_0}{q_U q_U^{-1} z}{0}] \\
&\Rightarrow \quad
\pi_{\mathbb{N}_0} T[\concat{q_U (u_1, a)_0}{z}{0}] = \pi_{\mathbb{N}_0} T[\concat{q_U (u_2, a)_0}{z}{0}],
\end{align*}
where we have used \eqref{eq:q_pi}, \eqref{eq:q_plus} and \eqref{eq:concat_insert}. Since $z$ was arbitrary, this shows that $q_U(u_1, a)_0 \nerode{0} q_U(u_2, a)_0$, \ie, $[q_U(u_1, a)_0]_0 = [q_U(u_2, a)_0]_0$. Therefore, $\bar{f}(a, [u_1]_0) = \bar{f}(a, [u_2]_0)$, so $f$ is well-defined.
\end{proof}

\begin{lemma} \label{lem:existence_g}
There exists a function $\bar{g}\colon U \times U^{\mathbb{Z}} / \nerode{0}\, \to Y$ such that
\begin{align*}
(T u)(0) = \bar{g}(u(0), [u]_0), \quad u \in U^{\mathbb{Z}}.
\end{align*}
\end{lemma}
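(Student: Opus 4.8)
The plan is to build $\bar g$ by declaring $\bar g(u(0),[u]_0) := (Tu)(0)$ and checking this is a legitimate (single‑valued, total) definition. Single‑valuedness amounts to the following: if $u_1,u_2\in U^{\mathbb{Z}}$ satisfy $u_1 \nerode{0} u_2$ \emph{and} $u_1(0)=u_2(0)$, then $(Tu_1)(0)=(Tu_2)(0)$. Totality is not an issue either: for any class $\xi$, any representative $u$ of $\xi$, and any $a\in U$, one has $\concat{u}{z}{0}=\concat{(u,a)_0}{z}{0}$ for every $z\in U^{\mathbb{Z}}$ (both sides ignore the value of the first argument at $0$), so $(u,a)_0 \nerode{0} u$, i.e.\ $(u,a)_0\in\xi$ with $(u,a)_0(0)=a$; hence every pair $(a,\xi)\in U\times U^{\mathbb{Z}}/\nerode{0}$ is of the form $(u(0),[u]_0)$, and $\bar g$ is defined everywhere. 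So the whole content of the lemma reduces to the single‑valuedness claim.

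For that claim I would argue by combining Nerode equivalence at time $0$ with causality, exactly in the spirit of the proof of Lemma~\ref{lem:existence_f}. Fix $u_1 \nerode{0} u_2$ with $u_1(0)=u_2(0)=:a$, and choose any $z\in U^{\mathbb{Z}}$ with $z(0)=a$. For $i\in\{1,2\}$, the sequences $\concat{u_i}{z}{0}$ and $u_i$ agree at every $k<0$ (both equal $u_i(k)$) and at $k=0$ (both equal $a$), hence $\pi_{\mathbb{Z}_-}(\concat{u_i}{z}{0})=\pi_{\mathbb{Z}_-}u_i$. Applying causality, $(T\pi_{\mathbb{Z}_-}v)(0)=(Tv)(0)$, to $v=\concat{u_i}{z}{0}$ and to $v=u_i$ gives
\begin{align*}
(T\concat{u_i}{z}{0})(0)=(T\pi_{\mathbb{Z}_-}(\concat{u_i}{z}{0}))(0)=(T\pi_{\mathbb{Z}_-}u_i)(0)=(Tu_i)(0).
\end{align*}
On the other hand, $u_1\nerode{0} u_2$ means $\pi_{\mathbb{N}_0}T(\concat{u_1}{z}{0})=\pi_{\mathbb{N}_0}T(\concat{u_2}{z}{0})$, and evaluating at $0\in\mathbb{N}_0$ yields $(T\concat{u_1}{z}{0})(0)=(T\concat{u_2}{z}{0})(0)$. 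Chaining the three displayed equalities gives $(Tu_1)(0)=(Tu_2)(0)$, as required.

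I do not anticipate a genuine obstacle here; the argument is short and elementary. The one place that needs attention is the choice $z(0)=a$: this is precisely what forces $\concat{u_i}{z}{0}$ and $u_i$ to coincide on \emph{all} of $\mathbb{Z}_-$ (including time $0$), so that causality can be invoked to strip away the ``future'' part $z$ of the concatenation. Once $\bar g$ is shown well‑defined, the identity $(Tu)(0)=\bar g(u(0),[u]_0)$ holds by construction (take the representative $u$ itself).
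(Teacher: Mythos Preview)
Your proof is correct and follows essentially the same approach as the paper: both define $\bar g$ so that $\bar g(a,[u]_0)$ returns $(Tu)(0)$ for a representative $u$ with $u(0)=a$, and both establish well-definedness by choosing $z$ with $z(0)=a$, invoking Nerode equivalence at time $0$, and using causality to strip off the future part of the concatenation. Your explicit totality check (that $(u,a)_0\nerode{0} u$) is exactly the paper's observation $[u]_0=[(u,a)_0]_0$, so the two arguments differ only in packaging.
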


\begin{proof}
One can define $\bar{g}$ as $\bar{g}(a, [u]_0) := (T (u,a)_0)(0)$ for all $a \in U$, $u \in U^{\mathbb{Z}}$; this satisfies the statement of the theorem, since $[u]_0 = [(u,a)_0]_0$. This function is well-defined, since if $u_1 \nerode{0} u_2$, and one picks any $z \in U^{\mathbb{Z}}$ such that $z(0) = a$,
\begin{align*}
&\pi_{\mathbb{N}_0} T (\concat{u_1}{z}{0}) = \pi_{\mathbb{N}_0} T (\concat{u_2}{z}{0}) \\
&\Rightarrow \quad
[T (\concat{u_1}{z}{0})](0) = [T (\concat{u_2}{z}{0})](0) \\
&\Rightarrow \quad
[T \pi_{\mathbb{Z}_-} (\concat{u_1}{z}{0})](0) = [T \pi_{\mathbb{Z}_-} (\concat{u_2}{z}{0})](0) \;\; \text{(due to causality)} \\
&\Rightarrow \quad
[T \pi_{\mathbb{Z}_-} (u_1,a)_0](0) = [T \pi_{\mathbb{Z}_-} (u_2,a)_0](0) \\
&\Rightarrow \quad
(T (u_1,a)_0)(0) = (T (u_2,a)_0)(0) \quad \text{(due to causality)}.
\end{align*}

\vspace{-6mm}
\end{proof}

\medskip
Lemmas~\ref{lem:existence_f} and \ref{lem:existence_g} lead to the main result of this note:

\smallskip
\begin{theorem} \label{thm:realization}
Consider a causal, time invariant dynamical system $T$ from $U$ to $Y$. Then, there exist functions $\bar{f}\colon U \times U^{\mathbb{Z}} / \nerode{0}\; \to U^{\mathbb{Z}} / \nerode{0}$ and $\bar{g}\colon U \times U^{\mathbb{Z}} / \nerode{0}\; \to Y^{\mathbb{Z}}$ such that for all $u \in U^{\mathbb{Z}}$ and $y = T u\in Y^{\mathbb{Z}}$, we have that
\begin{align} \label{eq:state_space}
\begin{aligned}
x(n+1) &= \bar{f}(u(n), x(n)) \\
y(n) &= \bar{g}(u(n), x(n)), \qquad n \in \mathbb{Z},
\end{aligned}
\end{align}
where $x \in (U^{\mathbb{Z}} / \nerode{0})^{\mathbb{Z}}$.
\end{theorem}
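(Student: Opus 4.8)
The plan is to assemble the realization directly from the two lemmas just proved: take $\bar f$ as in Lemma~\ref{lem:existence_f} and $\bar g$ as in Lemma~\ref{lem:existence_g} (so that $\bar g$ takes values in $Y$, which is what is needed for a state-space realization in the sense of the definition above; the codomain $Y^{\mathbb Z}$ in the statement should be read as $Y$), and then, for a given input $u\in U^{\mathbb Z}$, exhibit an explicit state trajectory satisfying \eqref{eq:state_space}. The natural candidate is
\begin{align*}
x(n) := [q_U^n u]_0, \qquad n\in\mathbb Z,
\end{align*}
i.e., the state of $T$ at time $n$ along $u$ is the Nerode class at time $0$ of the input time-shifted so that time $n$ plays the role of time $0$. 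This is a well-defined element of $(U^{\mathbb Z}/\nerode{0})^{\mathbb Z}$, and $x(0)=[u]_0$ as expected.

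The single observation that makes the verification immediate is that inserting at time $0$ a value that is already present changes nothing: since $(q_U^n u)(0)=u(n)$, we have $(q_U^n u,\,u(n))_0=q_U^n u$ for every $n\in\mathbb Z$. Granting this, the state equation follows from the defining formula for $\bar f$:
\begin{align*}
\bar f\bigl(u(n),x(n)\bigr)=\bar f\bigl(u(n),[q_U^n u]_0\bigr)=\bigl[q_U (q_U^n u,\,u(n))_0\bigr]_0=[q_U^{\,n+1}u]_0=x(n+1),
\end{align*}
and the output equation follows from the defining formula for $\bar g$ together with the time invariance of $T$:
\begin{align*}
\bar g\bigl(u(n),x(n)\bigr)=\bigl(T(q_U^n u,\,u(n))_0\bigr)(0)=(T q_U^n u)(0)=(q_Y^n T u)(0)=(T u)(n)=y(n).
\end{align*}
Since $u$ was arbitrary, this establishes \eqref{eq:state_space}.

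I expect no real obstacle at this stage: all the substantive content — that $\bar f$ and $\bar g$ descend to well-defined maps on the quotient $U^{\mathbb Z}/\nerode{0}$, which is precisely where causality and time invariance are used — has already been absorbed into Lemmas~\ref{lem:existence_f} and~\ref{lem:existence_g}. The only things to be careful about are the bookkeeping with the shift and insertion operators and recognizing that the insertion appearing here is vacuous; the identities \eqref{eq:q_pi}, \eqref{eq:q_plus}, \eqref{eq:concat_insert} are not even needed in this final step. It may be worth remarking that the state space $U^{\mathbb Z}/\nerode{0}$ produced by this construction is in general far from minimal, which motivates the subsequent discussion of minimality.
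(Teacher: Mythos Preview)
Your proof is correct and follows exactly the paper's approach: invoke Lemmas~\ref{lem:existence_f} and~\ref{lem:existence_g} and take $x(n)=[q_U^n u]_0$ (which is what the paper writes as $[u]_n$, viewed as an element of $U^{\mathbb Z}/\nerode{0}$), your verification being a more explicit rendering of the paper's one-sentence argument. One aside is off, though: the Nerode state space is \emph{not} ``far from minimal'' --- Section~\ref{sec:minimality} establishes precisely the opposite.
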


\begin{proof}
By Lemmas~\ref{lem:existence_f} and \ref{lem:existence_g}, the expression $y = T u$, where $u \in U^{\mathbb{Z}}$ and $y \in Y^{\mathbb{Z}}$, can be described in the form \eqref{eq:state_space}, where $x(n) = [u]_n$ is an entry of the sequence $x \in (U^{\mathbb{Z}} / \nerode{0})^{\mathbb{Z}}$.
\end{proof}

\smallskip
Equations~\eqref{eq:state_space} correspond to what we call the \emph{Nerode (state-space) realization} of $T$.

\newstuff{
\begin{remark}
The formulation and derivation of the state-space realization in Theorem~\ref{thm:realization} resembles that of canonical recognizers of regular languages; see, \eg, \cite[Section~2.3]{Wonham-Cai-19}.
\end{remark}

\begin{remark}
Notice that, while the representation given by Theorem~\ref{thm:realization} appears to be quite abstract, under certain conditions it can be given a fairly concrete form. For example, if $U,Y$ are finite dimensional vector spaces, and $T$ is a linear mapping, $U^{\mathbb{Z}} / \nerode{0}$ inherits a linear structure; this is studied, \eg, in \cite[Chapter~5]{Kailath-80}, where the Nerode equivalence is used to construct explicit minimal state-space realizations for finite-dimensional linear systems.
\end{remark}}

\section{Minimality of the Nerode realization} \label{sec:minimality}

In this section we establish the minimality of the Nerode realization given by Theorem~\ref{thm:realization}. Consider a state-space realization $(f,g)$ of a system $T\colon U^{\mathbb{Z}} \to Y^{\mathbb{Z}}$, with state space $X$. Define the \emph{controllable subset}
\begin{align*}
X_c := &\{a \in X\colon \text{there is a } u \in U^{\mathbb{Z}} \text{ and an } x \in X^{\mathbb{Z}} \text{ with } x(0) = a \\
&\qquad\qquad \text{ such that the first eqn. of \eqref{eq:state_eqn} holds for all } n \in \mathbb{Z} \}.
\end{align*}
In words, $X_c$ corresponds to the set of values that the state $x$ can take at time $0$. Notice that the state space $X$ can be restricted to $X_c$ without loss of generality, since if for some $u \in U^{\mathbb{Z}}$, a state $x$ satisfying \eqref{eq:state_eqn} is such that $x(n) \notin X_c$ for some $n \in \mathbb{Z}$, then the input $q_U^{-n} u$ would yield a state $q_X^{-n} x$, where $(q_X^{-n} x)(0) = x(n) \notin X_c$, contradicting the definition of $X_c$.

The following result establishes that the realization given by Theorem~\ref{thm:realization} is minimal in a fairly general sense, as it establishes a surjective mapping between any realization $(f,g)$ and the Nerode realization.

\begin{theorem} \label{thm:minimality}
Consider a causal, time invariant dynamical system $T$ from $U$ to $Y$, with a state-space realization $(f,g)$ in the state space $X$. Then, there exists a surjective mapping $\bar{P}\colon X_c \to U^{\mathbb{Z}} / \nerode{0}$ such that $\bar{f}(b, \bar{P}(a)) = \bar{P}(f(b, a))$ and $\bar{g}(b, \bar{P}(a)) = g(b, a)$ for all $b \in U$ and $a \in X_c$; see Figure~\ref{fig:PN}.
\end{theorem}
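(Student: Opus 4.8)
The idea is to define $\bar{P}$ by pushing a realization state through the Nerode quotient: given $a\in X_c$, pick any input $u\in U^{\mathbb{Z}}$ whose state trajectory satisfies $x(0)=a$ (such a $u$ exists by definition of $X_c$), and set $\bar{P}(a):=[u]_0$. The first task is to show this is well-defined, i.e.\ independent of the choice of $(u,x)$ with $x(0)=a$. For this I would argue that if two inputs $u_1,u_2$ produce state trajectories passing through $a$ at time $0$, then for any continuation $z\in U^{\mathbb{Z}}$ the concatenations $\concat{u_1}{z}{0}$ and $\concat{u_2}{z}{0}$ drive the realization through the \emph{same} state trajectory for $n\geq 0$: both start at $a$ at time $0$ and thereafter evolve by $x(n+1)=f(z(n),x(n))$ (using the concatenated input, which equals $z$ for $n\geq 0$), hence by induction the states agree for all $n\geq 0$, and so the outputs $g(z(n),x(n))$ agree for all $n\geq 0$; by \eqref{eq:state_eqn} this gives $\pi_{\mathbb{N}_0}T(\concat{u_1}{z}{0})=\pi_{\mathbb{N}_0}T(\concat{u_2}{z}{0})$ for all $z$, i.e.\ $u_1\nerode{0}u_2$. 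This simultaneously shows well-definedness and that $\bar P$ does not lose information beyond the Nerode level.

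Surjectivity is immediate: given any class $[u]_0$, the realization has a state trajectory $x$ for the input $u$ by definition of a realization, and $x(0)\in X_c$ satisfies $\bar{P}(x(0))=[u]_0$.

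For the intertwining relations, fix $b\in U$, $a\in X_c$, and choose $(u,x)$ with $x(0)=a$ realizing $T$. For the output equation: modify $u$ at time $0$ to value $b$, i.e.\ consider $(u,b)_0$. Its state trajectory still has $x(0)=a$ (the state at time $0$ depends only on the input strictly before $0$), so $(Tu)$... more precisely $\bar{g}(b,\bar{P}(a))=\bar{g}(b,[u]_0)=\bar{g}(b,[(u,b)_0]_0)=(T(u,b)_0)(0)=g((u,b)_0(0),x(0))=g(b,a)$, where the third equality is the defining property of $\bar g$ from Lemma~\ref{lem:existence_g} and the fourth is \eqref{eq:state_eqn}. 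For the transition equation: again work with $(u,b)_0$, whose state trajectory $x$ has $x(0)=a$ and $x(1)=f(b,a)$. On one hand, $\bar{P}(f(b,a))=\bar{P}(x(1))=[q_U(u,b)_0]_0$, since $q_U(u,b)_0$ is an input whose state trajectory $q_X x$ satisfies $(q_X x)(0)=x(1)=f(b,a)$. On the other hand, $\bar{f}(b,\bar{P}(a))=\bar{f}(b,[u]_0)=[q_U(u,b)_0]_0$ by the definition of $\bar f$ in Lemma~\ref{lem:existence_f}. Hence the two sides coincide.

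The main obstacle is the well-definedness step: one must be careful that the relevant statement is about the \emph{time-invariant, causal} system driven by a concatenated input, and that the state at time $0$ genuinely depends only on the input before time $0$ (so that inserting or altering the value at time $0$ does not change $x(0)$) — this is where causality and the structure \eqref{eq:state_eqn} of the realization are used, together with the shift-invariance manipulations analogous to \eqref{eq:q_pi}, \eqref{eq:q_plus}. Once that is in place, surjectivity and the two commuting identities follow by unwinding the definitions of $\bar f$, $\bar g$ and $\bar P$ together with \eqref{eq:state_eqn}.
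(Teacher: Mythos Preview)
Your proposal is correct and follows essentially the same route as the paper: define $\bar P(a):=[u]_0$ for any $(u,x)$ reaching $a$ at time $0$, prove well-definedness by showing that two such inputs become Nerode equivalent (same state at time $0$ plus same continuation $z$ forces identical future states and outputs via \eqref{eq:state_eqn}), observe surjectivity directly, and derive the two commuting identities by applying the definitions of $\bar f,\bar g$ from Lemmas~\ref{lem:existence_f}--\ref{lem:existence_g} to the modified input $(u,b)_0$. Your closing caveat about ``the state at time $0$ depending only on the input before time $0$'' is phrased a bit loosely---what you actually need (and use) is that from a given trajectory $(u,x)$ with $x(0)=a$ one can \emph{construct} a trajectory for $(u,b)_0$ that still has $\tilde x(0)=a$, by keeping $\tilde x(n)=x(n)$ for $n\le 0$ and iterating \eqref{eq:state_eqn} forward---which is exactly how the paper handles it.
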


\begin{figure}[t]
\centering
\begin{tikzcd}
X_c \arrow[d, "\bar{P}"', dashed, two heads] \arrow[rr, "{f(b, \cdot)}"]      &  & X_c \arrow[d, "\bar{P}", dashed, two heads] \arrow[rr, "{g(b,\cdot)}"]         &  & Y \\
U^{\mathbb{Z}} / \stackrel{0}{\sim} \arrow[rr, "{\bar{f}(b, \cdot)}"] &  & U^{\mathbb{Z}} / \stackrel{0}{\sim} \arrow[rru, "{\bar{g}(b,\cdot)}"'] &  &  
\end{tikzcd}

\caption{Commutative diagram for $\bar{P}$ (cf. Theorem~\ref{thm:minimality}) \cite{MacLane-Birkhoff-88}. The double- headed arrows represent surjective maps.}
\label{fig:PN}
\end{figure}

\begin{proof}
Given an $a \in X_c$, let $u \in U^{\mathbb{Z}}$ and $x \in X^{\mathbb{Z}}$ as given in the definition of $X_c$, and define $\bar{P}(a) := [u]_0$. This mapping is clearly surjective, and to show that it is well-defined, let $u_1, u_2 \in U^{\mathbb{Z}}$ and $x_1,x_2 \in X^{\mathbb{Z}}$ be such that $(u_1, x_1)$ and $(u_2, x_2)$ satisfy \eqref{eq:state_eqn}, and $x_1(0) = x_2(0) = a$. Then, for every $z \in U^{\mathbb{Z}}$, the inputs $\concat{u_1}{z}{0}$ and $\concat{u_2}{z}{0}$ generate states $\tilde{x}_1, \tilde{x}_2 \in X^{\mathbb{Z}}$ and outputs $\tilde{y}_1, \tilde{y}_2 \in Y^{\mathbb{Z}}$, according to \eqref{eq:state_eqn}, such that $\pi_{\mathbb{N}_0} \tilde{x}_1 = \pi_{\mathbb{N}_0} \tilde{x}_2$ and $\pi_{\mathbb{N}_0} \tilde{y}_1 = \pi_{\mathbb{N}_0} \tilde{y}_2$, thus $\pi_{\mathbb{N}_0} T ({\concat{u_1}{z}{0}}) = \pi_{\mathbb{N}_0} T (\concat{u_2}{z}{0})$, or $[u_1]_0 = [u_2]_0$.

To establish the properties of $\bar{P}$, fix an $a \in X_c$ and a point $b \in U$, and let $u \in U^{\mathbb{Z}}$ and $x \in X^{\mathbb{Z}}$ as given in the definition of $X_c$. Let $\tilde{u} = (u,b)_0$, and define $\tilde{x} \in X^{\mathbb{Z}}$ such that $\tilde{x}(n) = x(n)$ for all $n \leq 0$, and $\tilde{x}(n+1) = f(\tilde{u}(n), \tilde{x}(n))$ for all $n > 0$. Then,
\begin{align*}
\bar{f}(b, \bar{P}(a))
&= \bar{f}(b, [u]_0) \\
&= [q_U (u,b)_0]_0 \\
&= [q_U \tilde{u}]_0 \\
&= \bar{P}(\tilde{x}(1)) \\
&= \bar{P}(f(b,a)).
\end{align*}
Similarly,
\begin{align*}
\bar{g}(b, \bar{P}(a))
= \bar{g}(b, [u]_0)
= T(u,b)_0
= g(b,a).
\end{align*}
This concludes the proof of the theorem.
\end{proof}

Theorem~\ref{thm:minimality} provides the most general statement on the minimality of the Nerode realization without relying on additional structure of $U,Y$ or $(f,g)$.
It implies, for instance, that if the cardinality of the state space given by Theorem~\ref{thm:realization}, $U^{\mathbb{Z}} / \nerode{0}$, is finite, it has the smallest number of elements among all realizations of $T$.
On the other hand, if $U,Y$ are finite-dimensional vector spaces, and $(f,g)$ is a realization of $T$ such that $f,g$ are linear functions, it can be shown (see, \eg, \cite[Chapter~5]{Kailath-80}) that the realization of Theorem~\ref{thm:realization} can be given a linear structure, of dimension not larger than that of the state-space realization described by $(f,g)$. For more general nonlinear setups, the following corollary provides a weaker result in the case of dynamical systems on topological manifolds, showing that the sense of minimality established in Theorem~\ref{thm:minimality} coincides with the notion of minimal dimension for points in $X_c$ which are \emph{locally observable}\footnote{Local observability is not a necessary assumption, but removing it may require machinery that is beyond the scope of this paper; in this regard, Corollary~\ref{cor:minimal_smooth} is only meant as a example of the type of results one could obtain for continuous/smooth dynamical systems.}.

\begin{defn}
Let $(f,g)$ be a state-space realization of a system $T\colon U^{\mathbb{Z}} \to Y^{\mathbb{Z}}$, with state space $X$ endowed with a topology. A point $a \in X$ is \emph{locally observable}\footnote{This definition of local observability is slightly different from the standard one, \cf \cite{albertini-d'alessandro-96}.} if there is a neighborhood $V$ of $a$ such that for all $a_1, a_2 \in V$, $a_1 \neq a_2$, there is a sequence $u \in U^{\mathbb{N}_0}$ such that at least one of the following identities does \emph{not} hold:
\begin{align*}
g(u(0), a_1) &= g(u(0), a_2) \\ 
g(u(1), f(u(0), a_1)) &= g(u(1), f(u(0), a_2)) \\
g(u(2), f(u(1),f(u(0), a_1))) &= g(u(2), f(u(1),f(u(0), a_2))) \\
&\;\,\vdots
\end{align*}
\end{defn}

\smallskip
Notice that if $a \in X$ is locally observable, so is $f(b,a) \in X$ for every $b \in U$. 

\begin{corollary} \label{cor:minimal_smooth}
Under the assumptions of Theorem~\ref{thm:minimality}, assume that $U,X_c, Y$ are finite-dimensional topological manifolds, and that $f,g$ are continuous mappings (when restricted to $X_c$). Then, $U^{\mathbb{Z}}/ \nerode{0}$ can be given a differentiable structure at every point $\bar{P}(a)$ where $a \in X_c$ is locally observable, of the same dimension as that of $X_c$, such that $\bar{P}, \bar{f}, \bar{g}$ are continuous mappings at $a$.
\end{corollary}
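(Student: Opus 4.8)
The plan is to show that local observability of $a$ forces the canonical map $\bar{P}$ of Theorem~\ref{thm:minimality} to be \emph{injective on an entire neighborhood of $a$}, and then to transport the manifold structure of $X_c$ through this local bijection, the remaining continuity checks being formal consequences of the functional equations of Theorem~\ref{thm:minimality}.

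First I would establish the local injectivity. Let $a\in X_c$ be locally observable with witnessing neighborhood $V$, and suppose $a_1,a_2\in V$ satisfy $\bar{P}(a_1)=\bar{P}(a_2)$. Fix any $u\in U^{\mathbb{N}_0}$ and set $a_i^{(0)}:=a_i$, $a_i^{(k+1)}:=f(u(k),a_i^{(k)})$. By the identity $\bar{f}(b,\bar{P}(\cdot))=\bar{P}(f(b,\cdot))$ of Theorem~\ref{thm:minimality} and induction on $k$, one gets $\bar{P}(a_1^{(k)})=\bar{P}(a_2^{(k)})$ for all $k\ge 0$; then the identity $\bar{g}(b,\bar{P}(\cdot))=g(b,\cdot)$ gives $g(u(k),a_1^{(k)})=\bar{g}(u(k),\bar{P}(a_1^{(k)}))=\bar{g}(u(k),\bar{P}(a_2^{(k)}))=g(u(k),a_2^{(k)})$ for every $k$, i.e.\ all the identities in the definition of local observability hold between $a_1$ and $a_2$. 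Since $a_1,a_2\in V$, this forces $a_1=a_2$, so $\bar{P}|_V\colon V\to \bar{P}(V)$ is a bijection.

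Next I would transport the structure. Since $X_c$ is a manifold and any sub-neighborhood of a local-observability neighborhood is again one, I may shrink $V$ to the domain of a chart $\varphi\colon V\to\mathbb{R}^{d}$, $d=\dim X_c$. I then declare $\bar{P}(V)\subseteq U^{\mathbb{Z}}/\nerode{0}$ to be a coordinate neighborhood of $\bar{P}(a)$ with chart $\psi:=\varphi\circ(\bar{P}|_V)^{-1}$, which endows $U^{\mathbb{Z}}/\nerode{0}$ with a $d$-dimensional differentiable structure at $\bar{P}(a)$ and tautologically makes $\bar{P}|_V$ a diffeomorphism onto its image; performing this around every $\bar{P}(a)$ with $a$ locally observable yields the asserted structure. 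For the continuity claims: $\bar{P}$ is continuous at $a$ because $\bar{P}|_V$ is a homeomorphism onto its image; on $U\times\bar{P}(V)$ Theorem~\ref{thm:minimality} gives $\bar{g}(b,p)=g\bigl(b,(\bar{P}|_V)^{-1}(p)\bigr)$, a composition of continuous maps, so $\bar{g}$ is continuous at $(b,\bar{P}(a))$ for every $b$; and near $\bar{P}(a)$ one has $\bar{f}(b,p)=\bar{P}\bigl(f(b,(\bar{P}|_V)^{-1}(p))\bigr)$, which is continuous since $(\bar{P}|_V)^{-1}$ and $f(b,\cdot)$ are continuous and, because $f(b,a)$ is again locally observable (noted before the corollary), the preceding construction also supplies a chart making $\bar{P}$ continuous in a neighborhood of $f(b,a)$. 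The dimension $d=\dim X_c$ is built into the construction.

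The main obstacle is exactly the first step: a priori $U^{\mathbb{Z}}/\nerode{0}$ carries no topology whatsoever, so everything rests on extracting local injectivity of $\bar{P}$ from local observability, which is what allows the manifold structure of $X_c$ to be imported wholesale. A secondary caveat is that charts built around the images of two distinct locally observable points need not agree on their overlap without a further hypothesis, but this is harmless here since the corollary only asserts a differentiable structure \emph{at} each such point $\bar{P}(a)$.
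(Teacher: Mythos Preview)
Your proposal is correct and follows essentially the same route as the paper: both arguments reduce to showing that local observability forces $\bar{P}$ to be injective on the neighborhood $V$, and then pull back the manifold structure of $X_c$ through this local bijection, with continuity of $\bar{f},\bar{g}$ following from the intertwining identities of Theorem~\ref{thm:minimality} together with the fact that $f(b,a)$ is again locally observable. The only cosmetic difference is that the paper phrases the injectivity step via the equivalence-kernel factorization $V\to V/\ker(\bar{P})\hookrightarrow U^{\mathbb{Z}}/\nerode{0}$ and an auxiliary output map $F$, whereas you establish it directly from the identities $\bar{f}(b,\bar{P}(\cdot))=\bar{P}(f(b,\cdot))$ and $\bar{g}(b,\bar{P}(\cdot))=g(b,\cdot)$; the substance is the same.
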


\begin{proof}
Let $a \in X_c$ be locally observable, and $V$ its associated neighborhood in $X_c$.
The idea of the proof is to endow $U^{\mathbb{Z}}/\nerode{0}$ with a manifold structure inherited from $X_c$ through the mapping $\bar{P}$ restricted to $V$. This mapping can be factorized as \cite[Section~I.9]{MacLane-Birkhoff-88}

\begin{center}
\begin{tikzcd}
V \arrow[rr, "P_{\ker(\bar{P})}"] \arrow[rrd, "\bar{P}"', two heads] & & V / \ker(\bar{P}) \arrow[d, "\tilde{P}", dashed, hook] \\
                                                               & & U^{\mathbb{Z}} / \stackrel{0}{\sim}             
\end{tikzcd}
\end{center}

\noindent where $\ker(\bar{P})$ is an equivalence relation in $V$ (called the \emph{equivalence kernel} of $\bar{P}$), such that for all $a_1,a_2 \in V$, $(a_1, a_2) \in \ker(\bar{P})$ iff $\bar{P}(a_1) = \bar{P}(a_2)$; $P_{\ker(\bar{P})}$ is the canonical projection of $X_c$ onto $X_c / \ker(\bar{P})$ (\ie, for every $a \in X_c$, $P_{\ker(\bar{P})}(a)$ is the equivalence class of $a$ in $X_c / \ker(\bar{P})$); and $\tilde{P}$ is a bijective mapping. Thus, we need to define a manifold structure on $V / \ker(\bar{P})$ such that $P_{\ker(\bar{P})}$ is smooth. To this end, note that $\ker({\bar{P}})$ is also the equivalence kernel of a mapping $F\colon U \to C(U^{\mathbb{N}_0}, Y^{\mathbb{N}_0})$ given by $(F(a))(u) = (g(u(0), a), g(u(1), f(u(0), a)), \dots)$; due to the assumption of local observability, $F$ is locally injective, so $V / \ker(\bar{P})$ is equal to $V$, and $P_{\ker(\bar{P})}$ is the identity mapping. Therefore, $U^{\mathbb{Z}} / \stackrel{0}{\sim}$ inherits around $\bar P(a)$, through $\bar P$, the same manifold structure of $X_c$ around $a$, for which the dimension of $U^{\mathbb{Z}} / \stackrel{0}{\sim}$ at $\bar P(a)$ is the same as of $X_c$ at $a$. Furthermore, $\bar{f}$ and $\bar{g}$ are continuous at $a$, since $f(b,a)$ is locally observable for every $b$ (so the same arguments apply to $\bar P(f(b,a))$).
%
%
%
%
\end{proof}

\smallskip
Under the assumptions of Corollary~\ref{cor:minimal_smooth}, in case $X$ is a topological manifold, and $X_c$ is a submanifold of it, it follows that the dimension of $U^{\mathbb{Z}} / \stackrel{0}{\sim}$ at $\bar P(a)$ is not larger than that of $a$ in $X$. Also, notice that it is possible to strengthen the corollary by imposing additional assumptions that ensure that $X_c$ can be further reduced to a locally observable submanifold; see, \eg, \cite{albertini-d'alessandro-96}.

\section{Conclusions} \label{sec:conclusions}

We have established that every non-autonomous discrete-time dynamical system which satisfies the properties of time invariance and causality admits a state-space representation. The result is fairly general, as it does not require the system to have any additional structure (\eg, continuity or differentiability). \newstuff{We have also discussed the minimality of the obtained state-space realization.}

The condition on time invariance can potentially be relaxed, at the expense of allowing functions $f$ and $g$ of the state-space representation to depend on time. However, the requirement of causality is crucial, since every discrete-time system which has a state-space representation is necessarily causal.

\bibliographystyle{plain} 
\bibliography{refs}
\end{document}